\documentclass[journal]{IEEEtran}
\usepackage{amsmath}
\usepackage{amssymb}
\usepackage{upref}
\usepackage{amsfonts}
\usepackage{graphicx}
\usepackage{verbatim}
\usepackage{pict2e}

\textheight9.47in \headheight0.0in \textwidth6.92in
\oddsidemargin-0.3in \evensidemargin0.2in \topmargin-0.6in

\newcommand{\field}[1]{\mathbb{#1}}

\newcommand{\C}{\field{C}}
\newcommand{\F}{\field{F}}

\newcommand{\T}{\field{T}}

\newcommand{\cA}{{\cal A}}

\newcommand{\cC}{{\cal C}}

\newcommand{\cG}{{\cal G}}

\newcommand{\cP}{{\cal P}}

\newcommand{\cU}{{\cal U}}
\newcommand{\cT}{{\cal T}}

\newcommand{\sP}{\cP}
\newcommand{\sG}{\cG}

\newcommand{\Gr}{\smash{{\sG\kern-1.5pt}_q\kern-0.5pt(n,k)}}
\newcommand{\Grtwo}{\smash{{\sG\kern-1.5pt}_2\kern-0.5pt(n,k)}}
\newcommand{\Gkone}{\smash{{\sG\kern-1.5pt}_q\kern-0.5pt(n,k_1)}}
\newcommand{\Gktwo}{\smash{{\sG\kern-1.5pt}_q\kern-0.5pt(n,k_2)}}
\newcommand{\Ps}{\smash{{\sP\kern-2.0pt}_q\kern-0.5pt(n)}}

\newcommand{\qnom}[2]{\genfrac{[}{]}{0pt}{}{#1}{#2}_q}
\DeclareMathOperator{\codeg}{codeg}

\newcommand{\deff}{\mbox{$\stackrel{\rm def}{=}$}}

\newtheorem{theorem}{Theorem}
\newtheorem{proposition}{Proposition}

\newtheorem{cor}{Corollary}

\begin{document}

\bibliographystyle{IEEEtran}

\title{The Asymptotic Behavior of Grassmannian Codes}
\author{{Simon R. Blackburn} and {Tuvi Etzion,~\IEEEmembership{Fellow,~IEEE}}
\thanks{S. Blackburn is with the Department of Mathematics,
Royal Holloway, University of London,
Egham, Surrey TW20 0EX, United Kingdom.
(email: s.blackburn@rhul.ac.uk).}
\thanks{T. Etzion is with the Department of Computer Science,
Technion --- Israel Institute of Technology, Haifa 32000, Israel.
(email: etzion@cs.technion.ac.il).}
\thanks{This work was supported in part by the Israeli
Science Foundation (ISF), Jerusalem, Israel, under
Grant 230/08.} }

\maketitle
\begin{abstract}
  The iterated Johnson bound is the best known upper bound on a size
  of an error-correcting code in the Grassmannian $\cG_q(n,k)$. The
  iterated Sch\"{o}nheim bound is the best known lower bound on the
  size of a covering code in $\cG_q(n,k)$. We use probabilistic
  methods to prove that both bounds are asymptotically attained for
  fixed $k$ and fixed radius, as $n$ approaches infinity. We also
  determine the asymptotics of the size of the best Grassmannian codes and
  covering codes when $n-k$ and the radius are fixed, as $n$
  approaches infinity.
\end{abstract}

\begin{keywords}
Covering bound, Grassmannian, hypergraph, packing bound, constant
dimension code.
\end{keywords}

\section{Introduction}
\label{sec:introduction}

\PARstart{L}{et} $\F_q$ be the finite field of order $q$ and let $n$
and $k$ be integers such that $0 \leq k \leq n$. The \emph{Grassmannian}
$\cG_q(n,k)$ is the set of all $k$-dimensional
subspaces of $\F_q^n$. We have that
$$
| \cG_q (n,k)| = \qnom{n}{k} \deff \frac{(q^n-1)(q^{n-1}-1)
\cdots (q^{n-k+1}-1)}{(q^k-1)(q^{k-1}-1) \cdots (q-1)}~,
$$
where $\qnom{n}{k}$ is the $q-$\emph{ary Gaussian binomial
coefficient}. A~natural measure of distance in $\cG_q(n,k)$ is the
\emph{subspace metric}~\cite{AAK01,KoKs08} given by
\[
d_S (U,V) \deff 2k - 2 \dim (U \cap V)
\]
for $U,V \in \cG_q(n,k)$. We say that $\C \subseteq \cG_q(n,k)$ is
an \emph{$(n,M,d,k)_q$ code in the Grassmann space} if $|\C|=M$
and $d_S (U,V) \geq d$ for all distinct $U, V \in \C$. Such a
code~$\C$ is also called a constant dimension code.  The subspaces
in~$\C$ are called \emph{codewords}. (Note that the distance
between any pair of elements of $\cG_q(n,k)$ is even.  Because of
this, some authors define the distance between subspaces $U$ and
$V$ as $\frac{1}{2}d_S(U,V)$.) An important observation is the
following: a code $\C$ in the Grassmann space $\cG_q(n,k)$ has
minimum distance $2\delta+2$ or more if and only if each subspace
in $\cG(n,k-\delta)$ is contained in at most one codeword.  There
is a `dual' notion to a Grassmannian code, known as a $q-$covering
design: we say that $\C \subseteq \cG_q(n,k)$ is a
\emph{$q$-covering design} $\C_q(n,k,r)$ if each element of
$\cG_q(n,r)$ is contained in at least one element of $\C$. If each
element of $\cG_q(n,r)$ is contained in exactly one element of
$\C$, we have a \emph{Steiner structure}, which is both an optimal
Grassmannian code and an optimal $q$-covering
design~\cite{EtVa11,ScEt02}. Codes and designs in the Grassmannian
have been studied extensively in the last five years due to the
work by Koetter and Kschischang~\cite{KoKs08} in random network
coding, who showed that an $(n,M,d,k)_q$ code can correct any $t$
packet insertions and any $s$ packet erasures, as long as $2t+2s
<d$. Our goal in this paper is to examine cases in which we can
determine the asymptotic behavior of codes and designs in the
Grassmannian.

Let $\cA_q(n,d,k)$ denote the maximum number of codewords in an
$(n,M,d,k)_q$ code. The \emph{packing bound} is the best known
asymptotic upper bound for $\cA_q(n,d,k)$. If we write $d=2\delta+2$,
we have
\begin{equation}
\label{eqn:ineq_packing} \cA_q(n,2\delta+2,k) \leq
\frac{\qnom{n}{k-\delta}}{\qnom{k}{k-\delta}}.
\end{equation}
This bound is proved by noting that in an $(n,M,2\delta+2,k)_q$
code, each $(k-\delta)$-dimensional subspace can be contained in
at most one codeword. Bounds on $\cA_q(n,d,k)$ were given in many
papers,
e.g.~\cite{EtSi09,EtSi11,EtVa08,EtVa11,KoKs08,KoKu08,TrRo10,WXS,XiFu09},
In particular, the well-known Johnson bound for constant weight
codes was adapted for constant dimension codes independently
in~\cite{EtVa08,EtVa11,XiFu09} to show that
\[
\cA_q (n,2 \delta+2 ,k)\leq \frac{q^n-1}{q^k-1}
\cA_q(n-1,2\delta+2,k-1).
\]
By iterating this bound, using the observation that
$\cA_q(n,2\delta+2,k)=1$ for all $k \leq \delta$, we obtain the \emph{iterated
Johnson bound}:
\begin{multline*}
\cA_q(n,2\delta+2,k) \\
\leq
\left\lfloor\frac{q^n-1}{q^k-1}
\left\lfloor\frac{q^{n-1}-1}{q^{k-1}-1} \cdots
\left\lfloor\frac{q^{n-k+\delta+1}-1}{q^{\delta+1}-1}
\cdots\right\rfloor\right\rfloor\right\rfloor.
\end{multline*}
It is not difficult to see that
the iterated Johnson bound is always stronger than the
packing bound (indeed, the packing bound may be derived as a
simple corollary of the iterated Johnson bound). However, the main
goal of this paper is to prove that the packing bound (and so the iterated
Johnson bound) is attained asymptotically for fixed $k$ and
$\delta$, $k \geq \delta$, when $n$ tends to infinity. In other
words, we will prove the following theorem, in which the term
$A(n) \sim B(n)$ means that $\lim_{n \rightarrow \infty} A(n)/B(n)
=1$.
\begin{theorem}
\label{thm:packing_asymptotics} Let $q$, $k$ and $\delta$ be fixed
integers, with $0 \leq \delta\leq k$ and such that $q$ is a prime
power. Then
\begin{equation}
\label{eqn_packing}
\cA_q(n,2\delta+2,k)\sim \frac{\qnom{n}{k-\delta}}{\qnom{k}{k-\delta}}
\end{equation}
as $n\rightarrow\infty$.
\end{theorem}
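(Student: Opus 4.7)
The upper bound in the theorem is exactly the packing bound~\eqref{eqn:ineq_packing}, so the content lies in producing a matching lower bound. My plan is to cast the problem as a near-perfect matching problem in a nearly-regular uniform hypergraph and then invoke the Pippenger--Frankl--R\"odl nibble method.

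Define the hypergraph $H_n$ with vertex set $V(H_n) = \cG_q(n,k-\delta)$ and edge set indexed by $\cG_q(n,k)$: for each $W \in \cG_q(n,k)$ let the corresponding edge be
\[
E_W \deff \mathset{U \in \cG_q(n,k-\delta) : U \subseteq W}.
\]
By the observation recalled in the excerpt, a set $\cC \subseteq \cG_q(n,k)$ is an $(n,M,2\delta+2,k)_q$ code if and only if the corresponding collection of edges $\{E_W : W \in \cC\}$ is a matching in $H_n$. Thus $\cA_q(n,2\delta+2,k)$ is precisely the matching number $\nu(H_n)$, and the packing bound is just $\nu(H_n) \leq |V(H_n)|/r$, where $r \deff \qnom{k}{k-\delta}$ is the uniform edge size.

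The hypergraph is regular: each $U \in \cG_q(n,k-\delta)$ is contained in exactly $D \deff \qnom{n-k+\delta}{\delta}$ many $k$-subspaces. The codegree of a pair of distinct vertices $U_1,U_2 \in \cG_q(n,k-\delta)$ equals $\qnom{n-m}{k-m}$, where $m = \dim(U_1+U_2) \geq k-\delta+1$; this is maximised when $m = k-\delta+1$, giving maximum codegree $\qnom{n-k+\delta-1}{\delta-1}$. A direct comparison shows
\[
\frac{\qnom{n-k+\delta-1}{\delta-1}}{\qnom{n-k+\delta}{\delta}} = \frac{q^{\delta}-1}{q^{n-k+\delta}-1} \longrightarrow 0
\]
as $n \to \infty$, so the maximum codegree is $o(D)$ while $r$ stays fixed. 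These are exactly the hypotheses of the Pippenger--Frankl--R\"odl theorem on near-perfect matchings in uniform hypergraphs: any sequence of $r$-uniform hypergraphs which is $(1+o(1))$-regular with maximum codegree $o(D)$ admits a matching covering a $(1-o(1))$-fraction of the vertices. Applying this to $H_n$ produces a matching of size $(1+o(1))|V(H_n)|/r = (1+o(1))\qnom{n}{k-\delta}/\qnom{k}{k-\delta}$, which furnishes the required lower bound on $\cA_q(n,2\delta+2,k)$.

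The main obstacle I expect is simply the careful verification of the nibble hypotheses and the choice of a convenient statement of Pippenger's theorem to cite; the combinatorics of the hypergraph itself is clean because the Grassmannian is vertex-transitive under $GL_n(\F_q)$, which immediately gives exact regularity and the clean codegree calculation above. Once the hypergraph is set up, there is essentially no further computation beyond confirming that the ratio of codegree to degree tends to zero, which is a one-line estimate on Gaussian binomials.
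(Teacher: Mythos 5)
Your proposal is correct and takes essentially the same route as the paper: the identical hypergraph with vertex set $\cG_q(n,k-\delta)$ and edges indexed by $\cG_q(n,k)$, the same degree and codegree computations, and a nibble-type near-perfect-matching theorem applied as a black box. The only difference is the choice of black box — the paper invokes Vu's theorem, which requires uniformity at least $4$ and therefore treats $k\leq 2$ and $\delta\in\{0,k\}$ separately, whereas citing Pippenger--Frankl--R\"odl spares you the $k=2$ case (the cases $\delta\in\{0,k\}$, where the hypergraph is $1$-uniform, are trivial anyway) at the cost of the explicit error term that Vu's bound provides.
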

In fact, the proof of our theorem shows a little more than this: see
the proof of the theorem and the comment in the last section of this
paper. Our proof of the lower bound is probabilistic, making use of
some of the theory of quasi-random hypergraphs. There are known
explicit constructions that produce codes whose size is within a
constant factor of the packing bound as
$n\rightarrow\infty$. Currently, the best codes known are the codes of
Etzion and Silberstein~\cite{EtSi09} that are obtained by extending
the codes of Silva, Kschischang, and Koetter~\cite{SKK08} using a
`multi-level construction'. If $q=2$ and $\delta=2$, then the ratio
between the size of the code and the packing bound is 0.6657, 0.6274,
and 0.625 when $k=4$, $k=8$, and $k=30$ respectively, as $n$ tends to
infinity. When $k=3$, the ratio of 0.7101 in~\cite{SKK08} was improved
in~\cite{EtSi11} to 0.7657. The Reed--Solomon-like codes
of~\cite{KoKs08} represented as a lifting of codewords of maximum rank
distance codes~\cite{SKK08} approach the packing bound as
$n\rightarrow\infty$ when one of $\delta$ or $q$ also tends to
infinity~\cite[Lemma~19]{EtSi11}. Theorem~\ref{thm:packing_asymptotics}
shows that there exist codes approaching the packing bound as
$n\rightarrow\infty$ even when $\delta$ and $q$ are fixed; of course,
the challenge is now to construct such codes explicitly.

The paper also proves a similar result for $q$-covering designs.~ Let
$\cC_q(n,k,r)$ denote the minimum
number of $k$-dimensional subspaces in a $q$-covering design
$\C_q(n,k,r)$. Bounds on $\cC_q(n,k,r)$ can be found
in~\cite{Etz11,EtVa11a}. Setting $r=k-\delta$, the \emph{covering bound} states
that

\begin{small}
\begin{equation}
\label{eq:covering}
\cC_q(n,k,r) \geq
\frac{\qnom{n}{k-\delta}}{\qnom{k}{k-\delta}}.
\end{equation}
\end{small}
This bound may be proved by observing that in a $\C_q(n,k,k-\delta)$
covering design each $(k-\delta)$-dimensional subspace must be
contained in at least one codeword. The \emph{Sch\"{o}nheim bound} is an
analogous result to the Johnson bound above:
\[
\cC_q(n,k,r) \geq \frac{q^n-1}{q^k-1} \cC_q(n-1,k-1,r-1).
\]
This bound implies the iterated Sch\"{o}nheim bound~\cite{EtVa11a}:
\begin{equation}
\label{eq:schonheim}
\cC_q(n,k,r) \geq \left\lceil\frac{q^n\!-\!1}{q^k\!-\!1}
\left\lceil\frac{q^{n-1}\!-\!1}{q^{k-1}\!-\!1} \cdots
\left\lceil\frac{q^{n-r+1}\!-\!1}{q^{k-r+1}\!-\!1}\right\rceil
\cdots\right\rceil\right\rceil.
\end{equation}
The iterated Sch\"{o}nheim bound is always at least as strong as the
covering bound. But the following theorem shows that when $k$ and
$\delta$ are fixed with $n\rightarrow\infty$ the covering bound (and
so the iterated  Sch\"{o}nheim bound) is attained asymptotically:
\begin{theorem}
\label{thm:covering_asymptotics} Let $q$, $k$ and $\delta$ be
fixed integers, with ${0\leq \delta\leq k}$ and such that $q$ is a
prime power. Then
\[
\cC_q(n,k,k-\delta)\sim \frac{\qnom{n}{k-\delta}}{\qnom{k}{k-\delta}}
\]
as $n\rightarrow\infty$.
\end{theorem}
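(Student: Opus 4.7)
My plan for Theorem~\ref{thm:covering_asymptotics} is the following. The lower bound is exactly the covering bound~(\ref{eq:covering}), so only the matching upper bound
$$\cC_q(n,k,k-\delta)\le (1+o(1))\frac{\qnom{n}{k-\delta}}{\qnom{k}{k-\delta}}$$
needs to be proved. I would rephrase the problem as an edge-covering problem in a natural auxiliary hypergraph, and then invoke the quasi-random/nibble machinery used for Theorem~\ref{thm:packing_asymptotics}.

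Let $H$ be the hypergraph with vertex set $V(H)=\cG_q(n,k-\delta)$ and edge set $\mathset{E_W:W\in\cG_q(n,k)}$, where $E_W=\mathset{U\in\cG_q(n,k-\delta):U\subseteq W}$. By definition, a $q$-covering design $\C_q(n,k,k-\delta)$ is precisely a family of edges of $H$ that covers every vertex, so $\cC_q(n,k,k-\delta)$ is the minimum size of an edge cover of $H$. I would then check that $H$ has the structure required by the nibble: it is uniform with edge size $r=\qnom{k}{k-\delta}$; it is $D$-regular with $D=\qnom{n-k+\delta}{\delta}$; and for distinct $U_1,U_2\in\cG_q(n,k-\delta)$ with $\dim(U_1\cap U_2)=j$, a $k$-dimensional $W$ contains both iff $W\supseteq U_1+U_2$, so the codegree equals $\qnom{n-2(k-\delta)+j}{2\delta-k+j}$ when $j\ge k-2\delta$ and is $0$ otherwise. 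Using $\qnom{a}{b}=\Theta(q^{b(a-b)})$, the maximum codegree (attained at $j=k-\delta-1$) is of order $q^{(\delta-1)(n-k)}$, whereas $D$ is of order $q^{\delta(n-k)}$, so the ratio is $O(q^{-(n-k)})=o(1)$ as $n\to\infty$.

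With these estimates in hand I would apply a standard R\"odl-nibble covering theorem (in the form due to Pippenger, or Frankl--R\"odl): any $r$-uniform, $D$-regular hypergraph on $N$ vertices with maximum codegree $o(D)$ admits an edge cover of size $(1+o(1))N/r$ as $D\to\infty$. Substituting the parameters above gives a $q$-covering design of size $(1+o(1))\qnom{n}{k-\delta}/\qnom{k}{k-\delta}$, matching the covering bound. An equivalent route, which keeps us inside the framework already built for Theorem~\ref{thm:packing_asymptotics}, is to start from the near-perfect packing of $H$ produced by that proof (which leaves only $o(N)$ vertices uncovered) and then cover each leftover vertex by one additional $k$-dimensional subspace; since $r$ is constant, this adds only $o(N/r)$ edges.

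The main obstacle is not the invocation of the nibble theorem itself, which is a black box, but rather the verification of its hypotheses on $H$. The critical step is the codegree calculation: one must unpack the Gaussian binomial coefficients carefully and observe that the ratio of the maximum codegree to the degree is of order $q^{-(n-k)}$, which tends to zero precisely because $k$, $\delta$ and $q$ are fixed while $n\to\infty$. This is exactly the regime in which the semi-random method yields the sharp leading constant.
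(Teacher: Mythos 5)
Your proposal is correct and essentially the paper's approach: the paper proves this theorem by taking a maximum Grassmannian code, i.e.\ a near-perfect matching in exactly your hypergraph $H$ (obtained from Vu's theorem with the same degree and codegree estimates you compute), and adding one $k$-dimensional subspace per uncovered $(k-\delta)$-dimensional subspace --- which is precisely your second route, formalized in the paper as Proposition~\ref{prop:design_equiv_code} combined with Theorem~\ref{thm:packing_asymptotics}. Your first route, invoking a Pippenger/Frankl--R\"odl covering theorem directly, is only a cosmetic variant (with the minor benefit of avoiding Vu's $\ell\geq 4$ restriction and the separate treatment of $k=2$), though you should still dispose of the degenerate cases $\delta\in\{0,k\}$ trivially, since there the degree $D$ does not tend to infinity and the nibble hypotheses fail.
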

The proof of the theorem does not explicitly construct families of
$q$-designs whose ratio with the covering bound approaches $1$.~
The relationship between the best known $q$-covering designs and
the covering bound is more complicated than in the case of
Grassmannian codes, but it is usually the case that better ratios can be
obtained by explicit constructions of $q$-covering designs when
compared to the corresponding problem for Grassmannian codes. For
example, a ratio of 1.05 can be obtained by explicit constructions~\cite{Etz11}
when $q=2$, $k=3$, and $\delta=1$, as $n \rightarrow
\infty$.

The asymptotics of $\cA_q(n,2\delta+2,k)$ when $n-k$ and $\delta$ are
fixed, and of $\cC_q(n,k,r)$ when $n-k$ and $r$ are fixed, are also
determined in this paper. The result for $\cA_q(n,2\delta+2,k)$ is a
simple corollary of Theorem~\ref{thm:packing_asymptotics}, whereas the
result for $\cC_q(n,k,r)$ follows from results in finite geometry.

The rest of the paper is organized as follows. In
Section~\ref{sec:main} we will present the proofs for our main
theorems. In Section~\ref{sec:large_k} we consider the case when $n-k$
is fixed as $n\rightarrow\infty$. Finally, in
Section~\ref{sec:otherAsy} we provide comments on our results, and
state some open questions.

\section{Proofs of the main Theorems}
\label{sec:main}

We begin by observing a simple relationship between the minimum size
of a $q$-covering design and the maximum size of a Grassmannian code.

\begin{proposition}
\label{prop:design_equiv_code}
We have that
\begin{multline*}
\cC_q(n,k,k-\delta)\leq
\cA_q(n,2\delta+2,k)+\\
\qnom{n}{k-\delta}-\qnom{k}{k-\delta}
\cA_q(n,2\delta+2,k)
\end{multline*}
and
\begin{multline*}
\cA_q(n,2\delta+2,k)\geq \cC_q(n,k,k-\delta) +\\
\qnom{n}{k-\delta}-
\qnom{k}{k-\delta}\cC_q(n,k,k-\delta).
\end{multline*}
In particular, Theorems~\ref{thm:packing_asymptotics}
and~\ref{thm:covering_asymptotics} are equivalent.
\end{proposition}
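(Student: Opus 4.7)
The plan is to prove the two stated inequalities by direct constructive arguments---enlarging an optimal code to a covering design for the first, and deleting codewords from an optimal covering design for the second---and then to derive the equivalence of Theorems~\ref{thm:packing_asymptotics} and~\ref{thm:covering_asymptotics} by combining these inequalities with the packing bound~\eqref{eqn:ineq_packing} and the covering bound~\eqref{eq:covering}. For brevity, write $N=\qnom{n}{k-\delta}$ and $K=\qnom{k}{k-\delta}$, so that both theorems amount to the assertion that the relevant quantity is asymptotic to $N/K$.

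For the first inequality, start from an optimal $(n,M,2\delta+2,k)_q$ code $\C$ with $M=\cA_q(n,2\delta+2,k)$. The minimum-distance condition, as noted in the introduction, is equivalent to the statement that no $(k-\delta)$-dimensional subspace is contained in more than one codeword. Hence the codewords of $\C$ cover exactly $KM$ of the $N$ $(k-\delta)$-dimensional subspaces of $\F_q^n$. For each of the remaining $N-KM$ uncovered subspaces, pick any $k$-dimensional superspace and adjoin it to $\C$. The result is a $q$-covering design of size $M+(N-KM)$, which gives the first inequality.

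For the second inequality, take an optimal covering design $\C$ of size $c=\cC_q(n,k,k-\delta)$, and for each $(k-\delta)$-subspace $V$ let $m(V)\geq 1$ denote the number of codewords in $\C$ containing $V$, so that $\sum_V m(V)=Kc$. For every $V$ with $m(V)>1$, mark an arbitrary choice of $m(V)-1$ of the containing codewords for deletion, and let $D\subseteq\C$ be the set of all marked codewords. Since a single codeword may be marked from several different $V$'s, we only obtain $|D|\leq \sum_V (m(V)-1) = Kc-N$. The subcollection $\C\setminus D$ has the property that at most one of its elements contains any given $V$: if $m(V)>1$ then $m(V)-1$ of the containing codewords were marked, leaving at most one; if $m(V)=1$ the bound is trivial. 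Hence $\C\setminus D$ is a valid $(n,M',2\delta+2,k)_q$ code with $M'\geq c-(Kc-N)$, giving the second inequality.

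The equivalence of the two theorems then follows by combining these inequalities with the packing bound $\cA_q(n,2\delta+2,k)\leq N/K$ and the covering bound $\cC_q(n,k,k-\delta)\geq N/K$: if $\cA_q(n,2\delta+2,k)\sim N/K$, the first inequality gives $\cC_q(n,k,k-\delta)\leq N-(K-1)\cA_q(n,2\delta+2,k)\sim N/K$, and combined with the covering bound this pins down $\cC_q(n,k,k-\delta)\sim N/K$; the converse implication is symmetric. The constructions themselves are routine; the only step that deserves any care is the bookkeeping in the second inequality, namely verifying that marking $m(V)-1$ codewords per $V$ really reduces the covering multiplicity of \emph{every} $(k-\delta)$-subspace to at most one, even though the same codeword may be marked from several different $V$'s.
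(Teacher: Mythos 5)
Your proof is correct and follows essentially the same route as the paper: complete an optimal code to a covering design by adjoining one $k$-dimensional superspace per uncovered $(k-\delta)$-subspace, prune an optimal covering design by deleting at most $\qnom{k}{k-\delta}\cC_q(n,k,k-\delta)-\qnom{n}{k-\delta}$ redundant codewords, and then combine the two inequalities with the packing and covering bounds to get the equivalence. Your marking scheme in the second inequality is just a rephrasing of the paper's bookkeeping (the paper orders the incidence pairs $(U,V)$ and removes codewords from pairs that are not the first occurrence of their $U$), so the counts and conclusions coincide.
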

\begin{proof}
Let $\C$ be a Grassmannian code of size $\cA_q(n,2\delta+2,k)$.
There are exactly $\qnom{k}{k-\delta}\cA_q(n,2\delta+2,k)$
subspaces of dimension $k-\delta$ that lie in some element of~$\C$,
since no subspace of dimension $k-\delta$ is contained in
more than one element of $\C$. Thus there are $\Upsilon
\deff\qnom{n}{k-\delta}-\qnom{k}{k-\delta} \cA_q(n,2\delta+2,k)$
uncovered subspaces of dimension $k-\delta$, and we may construct
a $q$-covering design by adding  $\Upsilon$ or fewer
$k$-dimensional subspaces to $\C$. This establishes the first
inequality of the proposition.

To establish the second inequality, let $\C$ be a $q-$covering
design of size $\cC_q(n,k,k-\delta)$. There are
${\qnom{k}{k-\delta}\cC_q(n,k,k-\delta)}$ pairs $(U,V)$ such that
${U\in\cG_q(n,k-\delta)}$, $V\in\C$ and ${U\subseteq V}$. Suppose
we order these pairs in some way. Since every
$(k-\delta)-$dimensional subspace $U$ occurs at least once as the
first element of a pair, there are
$\qnom{k}{k-\delta}\cC_q(n,k,k-\delta)-\qnom{n}{k-\delta}$ pairs
$(U,V)$ where a pair $(U,V')$ for some $V'\in \C$ occurs earlier
in the ordering. Removing the corresponding subspaces $V$ from
$\C$ produces a Grassmannian code of size at least
$\cC_q(n,k,k-\delta) +\qnom{n}{k-\delta}-
\qnom{k}{k-\delta}\cC_q(n,k,k-\delta)$, and so the second
inequality follows.

Suppose Theorem~\ref{thm:packing_asymptotics} holds. Let $q$ be a
fixed prime power, and let $k$ and $\delta$ be fixed integers such
that $0\leq \delta\leq k$.  Then~\eqref{eqn_packing} implies that $\qnom{n}{k-\delta}-\qnom{k}{k-\delta}
\cA_q(n,2\delta+2,k)=o\left(\qnom{n}{k-\delta}\right)$ and so the
first inequality of the proposition implies that
\begin{align*}
\cC_q(n,k,k-\delta)&\leq \cA_q(n,2\delta+2,k)+o\left(\qnom{n}{k-\delta}\right)\\
&\leq \frac{\qnom{n}{k-\delta}}{\qnom{k}{k-\delta}}+o\left(\qnom{n}{k-\delta}\right) ~~~ \text{by}~(\ref{eqn:ineq_packing})\\
&\sim \frac{\qnom{n}{k-\delta}}{\qnom{k}{k-\delta}}.
\end{align*}
Theorem~\ref{thm:covering_asymptotics} now follows from this asymptotic inequality
and the covering bound~\eqref{eq:covering}.

The proof that Theorem~\ref{thm:packing_asymptotics} follows from
Theorem~\ref{thm:covering_asymptotics} is similar to the above, and is omitted.
\end{proof}

We prove Theorem~\ref{thm:packing_asymptotics} by using a result
in quasi-random hypergraphs.~ To state this result,~ we begin by
recalling some terminology from hypergraph theory.~ A hypergraph~$\Gamma$
is \emph{$\ell$-uniform} if all its hyperedges have
cardinality~$\ell$. The \emph{degree} $\deg(u)$ of a vertex
$u\in\Gamma$ is the number of hyperedges containing~$u$; if
$\deg(u)=r$ for all $u\in\Gamma$, we say that $\Gamma$ is
\emph{$r$-regular}. The \emph{codegree} $\codeg(u_1,u_2)$ of a
pair of distinct vertices $u_1,u_2\in\Gamma$ is the number of
hyperedges containing both $u_1$ and $u_2$. A \emph{matching} (or
edge packing) in $\Gamma$  is a set of pairwise disjoint
hyperedges of $\Gamma$. We write $\cU(\Gamma)$ for the minimum
number of vertices left uncovered by a matching in $\Gamma$. Thus
the largest number of hyperedges in a matching of an
$\ell$-uniform hypergraph $\Gamma$ on $v$ vertices is
$(v-\cU(\Gamma))/\ell$. The main theorem we use is due to
Vu~\cite[Theorem~1.2.1]{Vu}:

\begin{theorem}
\label{thm:hypergraph} Let $\ell$ be a fixed integer, where
$\ell\geq 4$. Then there exist constants $\alpha$ and $\beta$ with
the following property. Let $\Gamma$ be an $\ell$-uniform
$r$-regular hypergraph with $v$ vertices. Define $c
=\max\codeg(u_1,u_2)$, where the maximum is taken over all
distinct vertices $u_1,u_2\in \Gamma$. Then
\[
\cU(\Gamma)\leq \alpha v(c/r)^{1/(\ell-1)}(\log r)^\beta.
\]
\end{theorem}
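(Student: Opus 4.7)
The plan is to attack this via the semi-random method (the R\"odl nibble), refined with sharp polynomial concentration in order to obtain the optimal exponent $1/(\ell-1)$ and only a polylogarithmic slack. The matching is assembled in many small random stages (``nibbles''), after which a tiny residual vertex set remains and can be absorbed by a greedy cleanup.

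Concretely, fix a small parameter $\epsilon>0$ and build an iterative process with $\Gamma_0=\Gamma$. At stage $t$, with current hypergraph $\Gamma_t$ having approximate degree $r_t$, include each hyperedge of $\Gamma_t$ independently in a random set $E_t$ with probability $\epsilon/r_t$; keep only those edges of $E_t$ that are disjoint from every other edge of $E_t$, forming a valid partial matching $M_t$. Delete the vertices covered by $M_t$ and all hyperedges meeting them to obtain $\Gamma_{t+1}$. Two statistics must be tracked through the iteration: the surviving degree $\deg_{\Gamma_t}(u)$ of each vertex $u$, and the surviving codegree $\codeg_{\Gamma_t}(u_1,u_2)$ of each pair. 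A first-moment estimate gives
\[
\mathbb{E}[\deg_{\Gamma_{t+1}}(u)\mid u\text{ survives}]\;\approx\;\deg_{\Gamma_t}(u)\cdot e^{-\epsilon(\ell-1)},
\]
and the codegree contracts by a comparable factor, so that $r_t\approx r(1-c_1\epsilon)^t$ and $c_t\approx c(1-c_1\epsilon)^t$ for some $c_1=c_1(\ell)$; in particular the ratio $c_t/r_t$ remains bounded by $c/r$ throughout.

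The crux of the argument is a sharp concentration statement: each $\deg_{\Gamma_{t+1}}(u)$ and each $\codeg_{\Gamma_{t+1}}(u_1,u_2)$ must stay within a small multiplicative factor of its expectation with failure probability small enough to survive a union bound over all $v^2$ pairs and all $T=\Theta(\epsilon^{-1}\log r)$ stages. For this I would invoke Vu's polynomial concentration inequality (of Kim--Vu type), which bounds deviations of a low-degree polynomial of independent Bernoulli variables by expectations of its partial derivatives. Since $\deg_{\Gamma_{t+1}}(u)$ is naturally expressed as a polynomial of degree $\ell$ in the edge-selection indicators from $E_t$, this yields additive concentration error of order $r_t^{1/2}(\log r)^{O(1)}$, negligible relative to the main term while $r_t$ is large. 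The codegree polynomial is handled analogously, though more delicately, because cross-terms from pairs of vertices produce lower-order contributions whose expectations are governed by $c$.

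The principal obstacle is controlling the accumulated ``bad event'' probabilities over the full iteration without losing more than a polylogarithmic factor, and it is precisely here that Vu's improvement over the original Pippenger--Spencer and R\"odl arguments yields the sharp exponent $1/(\ell-1)$: polynomial concentration is strong enough that the codegree $c$ enters the final count only through its role in the dominant variance contribution. One would run the nibble until $r_T$ has shrunk to approximately $r(c/r)^{1/(\ell-1)}$, at which point the expected number of uncovered vertices is at most $v(c/r)^{1/(\ell-1)}(\log r)^{\beta'}$ for some $\beta'=\beta'(\ell)$. A final greedy matching on the residue loses only a constant factor, and we obtain the desired bound $\cU(\Gamma)\leq\alpha v(c/r)^{1/(\ell-1)}(\log r)^\beta$ with $\alpha,\beta$ depending only on $\ell$. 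The finicky part is the bookkeeping: choosing the thresholds that define an ``exceptional'' stage, verifying that the iteration can be restarted after rare deviations, and optimizing $\epsilon$ against the number of nibble rounds to minimize $\beta$.
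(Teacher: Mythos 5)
You should first be aware that the paper offers no proof of this statement: Theorem~\ref{thm:hypergraph} is quoted verbatim from Vu \cite[Theorem~1.2.1]{Vu}, and the authors only remark that its proof uses probabilistic methods in the spirit of Frankl and R\"odl. So there is no in-paper argument to compare yours against; the relevant benchmark is Vu's own proof, and your outline does follow the same general strategy (an iterated semi-random nibble whose degree and codegree statistics are controlled by polynomial concentration of Kim--Vu type, followed by a greedy cleanup).

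That said, what you have written is a plan rather than a proof, and it contains a concrete error at the one point where the specific shape of the bound is decided. You assert that the codegree contracts ``by a comparable factor'' to the degree, so that $c_t/r_t$ stays bounded by $c/r$. This is false: conditioned on a pair $u_1,u_2$ surviving a round, an edge through both needs only its remaining $\ell-2$ vertices to survive, whereas an edge counted in $\deg(u)$ needs $\ell-1$ survivors; hence $r_{t+1}\approx r_t e^{-\epsilon(\ell-1)}$ but $c_{t+1}\approx c_t e^{-\epsilon(\ell-2)}$, so the ratio $c_t/r_t$ grows by a factor of roughly $e^{\epsilon}$ per round, i.e.\ $c_t/r_t\approx (c/r)(r/r_t)^{1/(\ell-1)}$. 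This degradation is precisely what limits how long the nibble can run and is the source of the exponent $1/(\ell-1)$, so glossing over it hides the heart of the theorem. Your stopping threshold is also miscalibrated: since the surviving vertex fraction after $T$ rounds is about $(r_T/r)^{1/(\ell-1)}$, to leave only about $v(c/r)^{1/(\ell-1)}$ vertices uncovered you must iterate until $r_T$ is of order $c$ (times polylogarithmic factors), not $r(c/r)^{1/(\ell-1)}$. Finally, the genuinely hard part of Vu's theorem --- keeping the concentration estimates alive as $r_t$ becomes small, which is where the ``higher codegrees do help'' phenomenon and the $(\log r)^\beta$ loss actually enter --- is deferred in your write-up to ``bookkeeping''. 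As it stands the proposal correctly identifies the method but does not establish the theorem.
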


The proof of Theorem~\ref{thm:hypergraph} uses probabilistic methods,
inspired by the techniques of Frankl and
R\"odl~\cite{FrRo85,Rod85}. See~\cite{ABKV03,AlSp08,PiSp89}
for related work.

%

\begin{proof}[Proof of Theorem~\ref{thm:packing_asymptotics}]
If $\delta=0$, then the set of all subspaces in the Grassmannian
is a code that achieves the packing bound; if $\delta=k$ then any
single subspace of dimension $k$ achieves the packing bound. So we
may assume that $0<\delta<k$. Now suppose that $k=2$, so
$\delta=1$.  The theorem follows in this case since it is
known~\cite{EtVa11} that $\cA_q(n,4,2)= \frac{q^n-1}{q^2-1}$ if
$n$ is even; and $\cA_q(n,4,2) \geq \frac{q^n-1}{q^2-1} -
\frac{q^2}{q+1}$ if $n$ is odd. Thus we may suppose that $k\geq
3$.

Define a hypergraph $\Gamma_n$ as follows. We identify the set of vertices of
$\Gamma_n$ with $\cG_q(n,k-\delta)$, and the set of hyperedges of
$\Gamma_n$ with $\cG_q(n,k)$. We define a hyperedge $V$ to
contain a vertex $U$ if and only if $U\subseteq V$ (as
subspaces). We note that $\cA_q(n,2\delta+2,k)$ is exactly the maximum
size of a matching in $\Gamma_n$.

Now $\Gamma_n$ is an $\ell$-uniform hypergraph, where
$\ell=\qnom{k}{k-\delta}$. Note that $\ell\geq 4$, and $\ell$ does
not depend on $n$. Every vertex of $\Gamma_n$ has degree
$r(n)=\qnom{n-(k-\delta)}{\delta}$. Let $U_1$ and $U_2$ be
distinct vertices, so $\dim (U_1+U_2)=k-\delta+i$ for some
positive integer $i$. Then $\codeg(U_1,U_2)$ is the number of
$k-$dimensional subspaces containing $U_1+U_2$, which is at most
the number of $k$-dimensional subspaces containing a
$(k-\delta+1)$-dimensional subspace of $U_1+U_2$. So
\[
\codeg(U_1,U_2) = \qnom{n-(k-\delta+i)}{\delta-i} \leq
\qnom{n-(k-\delta+1)}{\delta-1}.
\]
But
\begin{align*}
\qnom{n-(k-\delta)}{\delta}&=\Theta(q^{n\delta})\text{ and }\\
\qnom{n-(k-\delta+1)}{\delta-1}&=\Theta(q^{n(\delta-1)})
\end{align*}
and so $\max_{u_1,u_2\in\Gamma_n}\codeg(u_1,u_2)=O(q^{-n}r(n))$.
Theorem~\ref{thm:hypergraph} now implies that there exists an
integer $\beta$ such that
\[
\cU(\Gamma_n)=O\left(  \qnom{n}{k-\delta} q^{-n/(\ell-1)}(\log
r(n))^\beta\right).
\]
Thus $\cU(\Gamma_n)=o(\qnom{n}{k-\delta})$, and so the largest
matching in~$\Gamma_n$ contains at least
$\qnom{n}{k-\delta}(1-o(1))/\ell$ edges. The packing bound shows
that the largest matching in $\Gamma_n$ has size at most
$\qnom{n}{k-\delta}/\ell$, and so $\cA(n,2\delta+2,k)\sim
\qnom{n}{k-\delta}/\ell$, as required.
\end{proof}

\begin{proof}[Proof of Theorem~\ref{thm:covering_asymptotics}]
  Theorem~\ref{thm:covering_asymptotics} immediately follows from
  Proposition~\ref{prop:design_equiv_code} and
  Theorem~\ref{thm:packing_asymptotics}.
\end{proof}

\section{The case of large $k$}
\label{sec:large_k}

In the previous section, we assumed that $k$ is fixed (and therefore
is small when compared to $n$). In this section,  we consider the
`dual'  case, where $n-k$ is assumed to be fixed (and so $k$ is
large).

It is proved in~\cite{EtVa11,KoKs08,XiFu09} that
$\cA_q(n,2\delta+2,k)=\cA_q(n,2\delta+2,n-k)$. (This holds because
taking the duals of all subspaces in an $(n,M,d,k)_q$ code in the
Grassmann space produces an $(n,M,d,n-k)_q$-code.) Thus we have the
following corollary of Theorem~\ref{thm:packing_asymptotics}, which
establishes the asymptotics of $\cA_q(n,2\delta+2,k)$ when $n-k$ and
$\delta$ are fixed with $n\rightarrow\infty$.

\begin{cor}
\label{cor:packing_large_k}
Let $q$, $t$ and $\delta$ be fixed integers such that $0\leq
\delta\leq t$, and such that $q$ is a prime power. Then
\begin{equation}
\label{eqn_packing}
\cA_q(n,2\delta+2,n-t)\sim \frac{\qnom{n}{t-\delta}}{\qnom{t}{t-\delta}}
\end{equation}
as $n\rightarrow\infty$.
\end{cor}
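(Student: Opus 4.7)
The plan is to reduce Corollary~\ref{cor:packing_large_k} to Theorem~\ref{thm:packing_asymptotics} by means of the well-known duality $\cA_q(n,2\delta+2,k) = \cA_q(n,2\delta+2,n-k)$, which is recalled in the paragraph immediately preceding the corollary and which is established in~\cite{EtVa11,KoKs08,XiFu09}. The justification is that if $\C\subseteq\cG_q(n,k)$ is an $(n,M,d,k)_q$ code in the Grassmann space, then its dual $\dual{\C} = \mathset{\dual{U} : U\in\C}\subseteq\cG_q(n,n-k)$ has the same cardinality and, because $\dim(\dual{U}\cap\dual{V}) = (n-k) - k + \dim(U\cap V)$, also the same minimum subspace distance.

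The steps are then immediate. First, I would set $k=n-t$ (so that $n-k=t$ is the fixed parameter) and apply duality to write
\[
\cA_q(n,2\delta+2,n-t) = \cA_q(n,2\delta+2,t).
\]
Second, I would observe that the right-hand side involves \emph{fixed} parameters $t$ and $\delta$ satisfying $0\leq\delta\leq t$, so Theorem~\ref{thm:packing_asymptotics} applies directly with the role of $k$ played by $t$, yielding
\[
\cA_q(n,2\delta+2,t) \sim \frac{\qnom{n}{t-\delta}}{\qnom{t}{t-\delta}}
\]
as $n\to\infty$. Combining the two displayed equalities gives exactly the claim of the corollary.

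There is essentially no obstacle here; the only thing to check is that the hypotheses of Theorem~\ref{thm:packing_asymptotics} are met in the invocation, namely that $q$ is a prime power (unchanged), that $t$ and $\delta$ are fixed integers with $0\leq\delta\leq t$ (this is exactly the hypothesis of the corollary), and that $n$ is eventually large enough for $\cG_q(n,t)$ to be well-defined (which holds for all $n\geq t$). So the proof is a one-line appeal to duality followed by a direct citation of Theorem~\ref{thm:packing_asymptotics}.
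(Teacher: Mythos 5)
Your proposal is correct and matches the paper's own argument: the corollary is obtained precisely by the duality $\cA_q(n,2\delta+2,k)=\cA_q(n,2\delta+2,n-k)$ (stated in the paragraph preceding the corollary, with the same justification via orthogonal complements preserving cardinality and subspace distance) followed by a direct application of Theorem~\ref{thm:packing_asymptotics} with $t$ in the role of $k$. Your verification of the hypotheses, including the dimension computation for the duals, is accurate and nothing further is needed.
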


Note that when $\delta>t$ we have that
$\cA_q(n,2\delta+2,n-t)=\cA_q(n,2\delta+2,t)=1$, so the restriction on
$\delta$ in Corollary~\ref{cor:packing_large_k} is a natural one.

The same techniques do not establish a similar result for
$q$-covering designs, since $\cC_q(n,k,r)$ and $\cC_q(n,n-k,r)$
are not equal in general. However, by translating some of the results
known in finite geometry into our language, we can determine
$\cC_q(n,k,r)$ when $q$, $r$ and $n-k$ are fixed, as
Theorem~\ref{thm:covering_large_k} below shows.

For the proof of the theorem will need the notion of a
$q-$Tur\'{a}n design. We say that $\C \subseteq \cG_q(n,r)$ is a
\emph{$q$-Tur\'{a}n design} $\T_q(n,k,r)$ if each element of
$\cG_q(n,k)$ contains at least one element of $\C$. Let
$\cT_q(n,k,r)$ denote the minimum number of $r$-dimensional
subspaces in a $q$-covering design $\T_q(n,k,r)$. The notions of
$q$-covering designs and $q$-Tur\'an designs are dual; the
following result was proved in~\cite{EtVa11a}:

\begin{theorem}
\label{thm:coveringTuran} $\cC_q(n,k,r) = \cT_q(n,n-r,n-k)$ for
all ${1 \leq r \leq k \leq n}$.
\end{theorem}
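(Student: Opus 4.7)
The plan is to exhibit a size-preserving bijection between the family of $q$-covering designs $\C_q(n,k,r)$ and the family of $q$-Tur\'{a}n designs $\T_q(n,n-r,n-k)$, using orthogonal complementation of subspaces of $\F_q^n$. Once such a bijection is established, equality of the two extremal quantities $\cC_q(n,k,r)$ and $\cT_q(n,n-r,n-k)$ is immediate by taking minima on both sides.

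First I would fix a non-degenerate symmetric bilinear form on $\F_q^n$ (say the standard dot product) and, for each subspace $U \subseteq \F_q^n$ of dimension $j$, write $\dual{U}$ for its orthogonal complement. Then $\dual{U}$ has dimension $n-j$, $\dual{\dual{U}} = U$, and so $U \mapsto \dual{U}$ defines a bijection $\cG_q(n,j) \to \cG_q(n,n-j)$. The crucial elementary property is that this map is order-reversing: $U \subseteq V$ if and only if $\dual{V} \subseteq \dual{U}$. Consequently, for any $\C \subseteq \cG_q(n,k)$ the set $\dual{\C} := \{\dual{V} : V \in \C\}$ is a subset of $\cG_q(n,n-k)$ of the same cardinality as $\C$.

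Next I would translate the defining conditions across this map. To say that $\C$ is a $q$-covering design $\C_q(n,k,r)$ means that every $U \in \cG_q(n,r)$ is contained in some $V \in \C$. Under dualization this is equivalent to: for every $U \in \cG_q(n,r)$ there exists $\dual{V} \in \dual{\C}$ with $\dual{V} \subseteq \dual{U}$. Since $U \mapsto \dual{U}$ is a bijection $\cG_q(n,r) \to \cG_q(n,n-r)$, the condition is precisely that every element of $\cG_q(n,n-r)$ contains at least one element of $\dual{\C} \subseteq \cG_q(n,n-k)$, which is the definition of $\dual{\C}$ being a $q$-Tur\'{a}n design $\T_q(n,n-r,n-k)$. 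Running the argument in reverse (using $\dual{\dual{\C}} = \C$) shows that $\C \leftrightarrow \dual{\C}$ is a cardinality-preserving bijection between the two families, so $\cC_q(n,k,r) = \cT_q(n,n-r,n-k)$.

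There is no genuine obstacle in this proof; the entire content is the dualization dictionary for subspaces of $\F_q^n$. The only minor checks needed are the dimension bookkeeping (so that the parameters of the resulting Tur\'{a}n design are admissible, which is ensured by the hypothesis $1 \leq r \leq k \leq n$) and the standard facts about the $\perp$ operation on subspaces over a finite field.
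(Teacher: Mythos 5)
Your proof is correct. The paper itself does not prove Theorem~\ref{thm:coveringTuran} but simply quotes it from~\cite{EtVa11a}; your orthogonal-complement argument --- using that $U\mapsto \dual{U}$ is a dimension-complementing, inclusion-reversing involution on the subspace lattice of $\F_q^n$, so that covering designs $\C_q(n,k,r)$ and Tur\'an designs $\T_q(n,n-r,n-k)$ correspond bijectively with cardinality preserved --- is exactly the standard duality proof of this fact, and all the auxiliary facts you invoke ($\dim \dual{U}=n-\dim U$, $\dual{\dual{U}}=U$, order reversal) hold for the standard dot product even though it is isotropic over $\F_q$.
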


Using normal spreads~\cite{Lun99} (also known as geometric
spreads) Beutelspacher and Ueberberg~\cite{BeUe91} proved the
following theorem using some of the theory of finite projective
geometry.

\begin{theorem}
\label{thm:Tspreads} $\cT_q (vm+\delta,vm-v+1+\delta,m) =
\frac{q^{vm}-1}{q^m-1}$ for all $v \geq 2$ and $m \geq 2$.
\end{theorem}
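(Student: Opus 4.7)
My plan is to prove matching upper and lower bounds. For the upper bound $\cT_q(vm+\delta,vm-v+1+\delta,m)\le(q^{vm}-1)/(q^m-1)$, I would exhibit an explicit $q$-Tur\'{a}n design of this size. Embed $W=\F_q^{vm}$ as a subspace of $\F_q^{vm+\delta}$ and identify $W$ with $\F_{q^m}^v$; take $\cS$ to be the normal spread consisting of all one-dimensional $\F_{q^m}$-subspaces of $W$, each regarded as an $m$-dimensional $\F_q$-subspace, so $|\cS|=(q^{vm}-1)/(q^m-1)$. I must verify that every $(vm-v+1+\delta)$-dimensional $\F_q$-subspace $U\le\F_q^{vm+\delta}$ contains some element of $\cS$. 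A dimension count gives $\dim(U\cap W)\ge vm-v+1$, so the problem reduces to the following key subclaim: every $\F_q$-subspace $X\le\F_{q^m}^v$ of $\F_q$-dimension at least $vm-v+1$ contains a one-dimensional $\F_{q^m}$-subspace.

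To prove the subclaim, I would fix a generator $\alpha$ of $\F_{q^m}$ over $\F_q$, let $T$ denote multiplication by $\alpha$ on $\F_{q^m}^v$ (an $\F_q$-linear bijection), and set $X^{*}=\bigcap_{i=0}^{m-1}T^{-i}(X)$. Subadditivity of codimension gives $\dim_{\F_q}X^{*}\ge vm-m(v-1)=m$. Because the minimal polynomial of $\alpha$ over $\F_q$ has degree $m$, the map $T^m$ is an $\F_q$-linear combination of $I,T,\dots,T^{m-1}$, from which one checks that $T(X^{*})\subseteq X^{*}$. Thus $X^{*}$ is $T$-invariant and hence an $\F_{q^m}$-subspace of positive $\F_{q^m}$-dimension; any nonzero vector in $X^{*}$ spans a one-dimensional $\F_{q^m}$-subspace contained in $X$.

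For the matching lower bound, I would apply Theorem~\ref{thm:coveringTuran} to translate the quantity into $\cC_q(vm+\delta,vm+\delta-m,v-1)$ and then use the iterated Sch\"{o}nheim bound~\eqref{eq:schonheim}. The claim to prove is that this iterated ceiling expression equals exactly $(q^{vm}-1)/(q^m-1)$, independent of $\delta\ge 0$. I would proceed by induction on $v$. The base case $v=2$ is the single ceiling $\lceil(q^{2m+\delta}-1)/(q^{m+\delta}-1)\rceil=q^m+1$, verified directly. For the inductive step, the inner $v-2$ levels are an instance of the same expression with parameters $(v-1,m,m+\delta-1)$, which by induction equals $(q^{(v-1)m}-1)/(q^m-1)$. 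The algebraic identity
\[
(q^{vm+\delta}-1)(q^{(v-1)m}-1)-(q^{vm}-1)(q^{vm+\delta-m}-1)=(q^m-1)q^{(v-1)m}(1-q^{\delta})
\]
then shows the outer product lies in the interval $\bigl((q^{vm}-1)/(q^m-1)-1,\,(q^{vm}-1)/(q^m-1)\bigr]$, so the outermost ceiling rounds it up to the integer $(q^{vm}-1)/(q^m-1)$.

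The main obstacle is the key subclaim. The codimension bound for $X^{*}$ is immediate, but establishing $T$-invariance requires the arithmetic input that $\alpha$ has degree $m$ over $\F_q$: this is precisely what forces $T^m x$ to lie in the $\F_q$-span of $x,Tx,\dots,T^{m-1}x$, so that membership of $x,Tx,\dots,T^{m-1}x$ in $X$ propagates to all higher powers $T^i x$ and upgrades a merely large $\F_q$-subspace into an honest $\F_{q^m}$-subspace. Once the subclaim is granted, both the upper-bound containment check and the inductive Sch\"{o}nheim computation are routine.
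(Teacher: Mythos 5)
Your argument is correct, but it is worth noting that the paper does not prove this statement at all: Theorem~\ref{thm:Tspreads} is quoted from Beutelspacher and Ueberberg~\cite{BeUe91}, whose proof goes through the theory of normal (geometric) spreads in finite projective spaces and in fact characterizes the optimal designs. Your proposal is a self-contained substitute. The upper bound is the same underlying object (the normal spread of one-dimensional $\F_{q^m}$-subspaces of $W\cong\F_{q^m}^v$ embedded in $\F_q^{vm+\delta}$), but your verification that every $(vm-v+1+\delta)$-dimensional subspace meets a spread element is an elementary linear-algebra argument: the dimension count $\dim_{\F_q}(U\cap W)\geq vm-v+1$, followed by the key subclaim via $X^{*}=\bigcap_{i=0}^{m-1}T^{-i}(X)$, the codimension estimate $\dim_{\F_q}X^{*}\geq vm-m(v-1)=m$, and the observation that $T^m\in\spn_{\F_q}\{I,T,\dots,T^{m-1}\}$ forces $T$-invariance, so that $X^{*}$ is a nonzero $\F_{q^m}$-subspace inside $X$ --- all of this checks out. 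For the lower bound you replace the geometric argument of~\cite{BeUe91} by Theorem~\ref{thm:coveringTuran} (so the target becomes $\cC_q(vm+\delta,vm+\delta-m,v-1)$) together with an exact evaluation of the iterated Sch\"onheim bound~\eqref{eq:schonheim}; your induction on $v$ is sound: the inner $v-2$ ceilings are the same expression with parameters $(v-1,m,m+\delta-1)$, the base case gives $q^m+1$, and your algebraic identity shows the outer product differs from $(q^{vm}-1)/(q^m-1)$ by $q^{(v-1)m}(1-q^{\delta})/(q^{(v-1)m+\delta}-1)\in(-1,0]$, so the ceiling is exactly $(q^{vm}-1)/(q^m-1)$. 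This Sch\"onheim computation closely parallels what the paper itself does in the proof of Theorem~\ref{thm:covering_large_k}. What your route buys is a complete elementary proof of the numerical value needed here; what it does not give is the stronger structural statement of~\cite{BeUe91} (essential uniqueness of the optimal $q$-Tur\'an design), which the paper remarks upon but does not use.
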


We remark that Beutelspacher and Ueberberg show much more: that
there is essentially only one optimal construction for a $q$-Tur\'an
design with these parameters.

As a consequence from Theorems~\ref{thm:coveringTuran}
and~\ref{thm:Tspreads} we obtain the following result for
$q$-covering designs.
\begin{cor}
\label{cor:spread}
Let $r$ and $n$ be positive integers such that $r+1$ divides $n$. Then
\[
\cC_q(n,n-n/(r+1),r)=\frac{q^n-1}{q^{n/(r+1)}-1}.
\]
\end{cor}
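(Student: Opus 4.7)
The plan is to chain the two results stated immediately above the corollary: first apply Theorem~\ref{thm:coveringTuran} to translate from covering designs to Tur\'an designs, and then apply Theorem~\ref{thm:Tspreads} after matching parameters. Concretely, setting $k=n-n/(r+1)$ in Theorem~\ref{thm:coveringTuran} identifies
\[
\cC_q(n,\,n-n/(r+1),\,r) \;=\; \cT_q(n,\,n-r,\,n/(r+1)).
\]

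The next step is to put the right-hand side in the form $\cT_q(vm+\delta,\,vm-v+1+\delta,\,m)$ required by Theorem~\ref{thm:Tspreads}. I would choose $m=n/(r+1)$, $v=r+1$, and $\delta=0$; a quick check shows $vm+\delta=n$ and $vm-v+1+\delta=n-r$, so the triples agree. Since $r$ is a positive integer we have $v\geq 2$, and whenever $m\geq 2$ Theorem~\ref{thm:Tspreads} immediately yields
\[
\cT_q(n,\,n-r,\,m)\;=\;\frac{q^{vm}-1}{q^m-1}\;=\;\frac{q^n-1}{q^{n/(r+1)}-1},
\]
which is the value asserted by the corollary.

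The only thing left is the boundary case $m=1$, i.e.\ $n=r+1$, which lies outside the hypothesis $m\geq 2$ of Theorem~\ref{thm:Tspreads}. I expect no real difficulty: a $\cC_q(r+1,r,r)$ covering design must cover every $r$-dimensional subspace of $\F_q^{r+1}$ by codewords that are themselves $r$-dimensional, so each codeword covers only itself and the optimum is $\qnom{r+1}{r}=(q^{r+1}-1)/(q-1)$, in agreement with the claimed formula. Thus there is no genuine obstacle here; the corollary is essentially a relabelling of Theorem~\ref{thm:Tspreads} under the duality of Theorem~\ref{thm:coveringTuran}, with the one excluded case dispatched by inspection.
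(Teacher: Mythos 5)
Your proposal is correct and follows essentially the same route as the paper: translate via Theorem~\ref{thm:coveringTuran}, apply Theorem~\ref{thm:Tspreads} with $\delta=0$, $v=r+1$, $m=n/(r+1)$, and settle the case $m=1$ by noting the design must contain every $r$-dimensional subspace. In fact your treatment of the boundary case is slightly more complete than the paper's, which explicitly mentions only $n=2$, $r=1$, whereas your argument covers all cases $n=r+1$ excluded by the hypothesis $m\geq 2$.
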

\begin{proof}
Theorems~\ref{thm:coveringTuran} and~\ref{thm:Tspreads} (in the case
when $\delta=0$) show that
\[
\cC_q(vm,vm-m,v-1)=\frac{q^{vm}-1}{q^m-1}
\]
for any integers $v\geq 2$ and $m\geq 2$. If we set $v=r+1$ and
$m=n/v$, the corollary follows except in the case when $n=2$ and
$r=1$. But the corollary is true in this case also, as a $q$-covering
design with these parameters must consist of all $1$-dimensional subspaces.
\end{proof}

\begin{theorem}
\label{thm:covering_large_k}
Let integers $q$, $t$ and $r$ be fixed, where $q$ is a prime
power. For all sufficiently large integers $n$,
\[
\cC_q(n,n-t,r)=\frac{q^{(r+1)t}-1}{q^t-1}.
\]
\end{theorem}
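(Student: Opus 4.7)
The plan is to reduce the statement to the finite-geometry results already recorded in this section. The first step is to pass from covering designs to Tur\'an designs via Theorem~\ref{thm:coveringTuran}, which gives
\[
\cC_q(n, n-t, r) = \cT_q(n, n-r, t).
\]
Once this reformulation is in place, the task becomes determining $\cT_q(n, n-r, t)$ for $n$ large compared with $r$ and $t$.

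The second step is to match these parameters against the explicit formula of Beutelspacher and Ueberberg in Theorem~\ref{thm:Tspreads}, which gives $\cT_q(vm+\delta, vm-v+1+\delta, m) = (q^{vm}-1)/(q^m-1)$ whenever $v,m\geq 2$. The natural substitution is $v = r+1$, $m = t$, and $\delta = n - (r+1)t$; a direct check yields $vm+\delta = n$, $vm-v+1+\delta = n-r$, and $m = t$, so the parameters line up. Since $r$ and $t$ are fixed, we have $\delta \geq 0$ for all sufficiently large $n$, and assuming $r\geq 1$ and $t\geq 2$ Theorem~\ref{thm:Tspreads} delivers
\[
\cT_q(n, n-r, t) = \frac{q^{(r+1)t}-1}{q^t-1},
\]
which is exactly the desired value.

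It remains to dispatch the small edge cases in the fixed parameters $r$ and $t$. If $r=0$, a single $(n-t)$-dimensional subspace covers the unique $0$-dimensional subspace, so both sides equal $1$. If $t=1$ and $r\geq 1$, the upper bound for $\cT_q(n,n-r,1)$ comes from taking the $1$-dimensional subspaces of any fixed $(r+1)$-dimensional subspace $W$: a dimension count shows that every $(n-r)$-dimensional subspace meets $W$ in at least a line, so these $(q^{r+1}-1)/(q-1)$ points form a Tur\'an design, and the matching lower bound is a standard blocking-set estimate. The main conceptual step in the whole proof is spotting the parameter correspondence $v=r+1$, $m=t$; once it is in hand, Theorem~\ref{thm:Tspreads} does all the real work, and the ``obstacle'' is really only the bookkeeping needed to confirm $\delta\geq 0$ and to cover the degenerate values of $r$ and $t$ separately.
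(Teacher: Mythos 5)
Your argument is correct, and it is genuinely more direct than the paper's. After converting $\cC_q(n,n-t,r)$ into $\cT_q(n,n-r,t)$ via Theorem~\ref{thm:coveringTuran}, you invoke Theorem~\ref{thm:Tspreads} at full strength, with $v=r+1$, $m=t$ and the shift $\delta=n-(r+1)t$; since the theorem as stated allows an arbitrary nonnegative $\delta$, this gives the exact value $\frac{q^{(r+1)t}-1}{q^t-1}$ for every $n\geq (r+1)t$ once $r\geq 1$ and $t\geq 2$, with no asymptotic or monotonicity argument needed. The paper instead uses only the $\delta=0$ instance of Theorem~\ref{thm:Tspreads} (packaged as Corollary~\ref{cor:spread}, applied at the single value $n'=t(r+1)$) and therefore has to do more work: it first proves $\cC_q(n+1,n+1-t,r)\leq\cC_q(n,n-t,r)$ to conclude the sequence is eventually constant, gets the upper bound from $n'$, and obtains the matching lower bound by evaluating the iterated Sch\"onheim bound~\eqref{eq:schonheim} for large $n$. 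Your route buys brevity and an explicit range of validity $n\geq(r+1)t$; the paper's route buys self-containment, in that it leans only on the spread case $\delta=0$ of Beutelspacher--Ueberberg and derives the lower bound from bounds already established in the paper. Two small caveats on your write-up: first, you should state explicitly that you are using Theorem~\ref{thm:Tspreads} for all nonnegative $\delta$ (the paper states it that way but only ever exploits $\delta=0$); second, in the case $t=1$ your lower bound is asserted as ``a standard blocking-set estimate'' --- this is the Bose--Burton theorem and is true, but it should be cited or proved, or you could instead extract the same lower bound from the iterated Sch\"onheim bound exactly as the paper does, which keeps the whole argument within the paper's toolkit. Your separate treatment of $r=0$ is fine and is, if anything, more careful than the paper, which leaves such degenerate parameters implicit.
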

\begin{proof}
We first note that
\begin{equation}
\label{eq:covering_large_k_eqn}
\cC_q(n+1,n+1-t,r)\leq \cC_q(n,n-t,r).
\end{equation}
This is proved in~\cite{EtVa11a}. To see
why~\eqref{eq:covering_large_k_eqn} holds, fix a $1-$dimensional
subspace $K$ of an ($n+1$)-dimensional vector space $V$. Let $\C$
be a $q$-covering design $\C_q(n,n-t,r)$ contained in the
$n$-dimensional space $V/K$. Then the set of subspaces $U$ such
that $K\subseteq U\subseteq V$ and $U/K\in\C$ is a $q$-covering
design $\C_q(n+1,n+1-t,r)$ containing at most $\cC_q(n,n-t,r)$
subspaces.

The inequality~\eqref{eq:covering_large_k_eqn} implies that for any
fixed $t$ and $r$, we have that $\C_q(n,n-t,r)$ is a non-increasing
sequence of positive integers as $n$ increases. So there exists a
constant $c$ (depending only on $q$, $t$ and $r$) so that
$\C_q(n,n-t,r)=c$ whenever $n$ is sufficiently large. It remains to
show that $c= (q^{(r+1)t}-1)/(q^t-1)$.

Set $n'=t(r+1)$, so $n'-t=n'-n'/(r+1)$. Corollary~\ref{cor:spread}
implies that
\[
c\leq \cC_q(n',n'-t,r)=\frac{q^{n'}-1}{q^{n'/(r+1)}-1}
  =\frac{q^{(r+1)t}-1}{q^t-1}.
\]

Now $c$ is bounded below by the Sch\"onheim bound~\eqref{eq:schonheim}. We give a simpler
form for the Sch\"onheim bound that holds for all sufficiently large $n$ as
follows. When $n$ is sufficiently large we find that
\[
\left\lceil \frac{q^{n-r+1}-1}{q^{k-r+1}-1}\right\rceil=q^t+1=\frac{q^{2t}-1}{q^t-1}.
\]
Moreover, for $i$ such that $0\leq i\leq r-2$,
\[
\left\lceil\frac{q^{n-i}-1}{q^{k-i}-1} \times \frac{q^{(r-i)t}-1}{q^t-1}\right\rceil=\frac{q^{(r-i+1)t}-1}{q^t-1}
\]
provided that $n$ is sufficiently large. These equalities show that
the right hand side of the Sch\"onheim bound~\eqref{eq:schonheim} is equal to
$(q^{(r+1)t}-1)/(q^t-1)$ for all sufficiently large integers $n$. So
$c\geq (q^{(r+1)t}-1)/(q^t-1)$, as required.
\end{proof}

\section{Optimal Codes and Research directions}
\label{sec:otherAsy}

In this section, we comment on our results, we provide a little extra
background, and we propose topics for further study.

We have proved that for a given $q$, if we fix $k$, and $\delta$,
where $\delta < k$, the packing bound for Grassmannian codes is
asymptotically attained when $n$ tends to infinity. We commented in
Section~\ref{sec:introduction} that the same is true when $q$ or
$\delta$ grows. In Section~\ref{sec:large_k}, we determined the
asymptotics of $\cA_q(n,2\delta+2,k)$ when $n-k$ and $\delta$ are
fixed. These results do not address the cases when $q$ and $\delta$
are fixed, but $k$ and $n-k$ both grow (for example when $k=\lfloor
\alpha n\rfloor$ for some fixed real number $\alpha\in (0,1)$). Can
similar results be obtained a wide range of these cases? When $k$
grows rather slowly when compared to $n$, it should be possible to use
a result of Alon et al~\cite{ABKV03} to show that
$\cA_q(n,2\delta+2,k)$ still approaches the packing bound.

The proof of Theorem~\ref{thm:packing_asymptotics} does not just give
the leading term of $\cA_q(n,2\delta+2,k)$: the order of the error
term is also given. However, we do not see any reason why this error term is
tight.

Similar questions can be asked about the relationship between the
covering bound and $\cC_q(n,k,r)$.  It seems that small $q$-covering
designs are easier to construct than large Grassmannian codes;
certainly there are more construction methods currently
known~\cite{Etz11,EtVa11a}.

As well as trivial cases, there are a few sets of parameters for
which the exact (or almost the exact) values of $\cA_q (n,d,k)$
and $\cC_q(n,k,r)$ are known.
Section~\ref{sec:large_k} discusses a family of optimal
$q$-covering designs. A family of optimal Grassmannian codes is known
when $d=2k$. \emph{Spreads} (from projective geometry)
give rise to optimal codes as well as $q$-covering designs when $k$
divides $n$. Known \emph{partial spreads} of maximum size give rise to
optimal codes in other
cases~\cite{DeMe07,ESS,EJSSS,GaSz03}.


For small parameters, the best known codes are very often cyclic
  codes, which are defined as follows. Let $\alpha$ be a primitive element of
GF($q^n$). We say that~a~code $\C \subseteq \cG_q(n,k)$ is
\emph{cyclic} if it has the following property: whenever $\{{\bf
  0},\alpha^{i_1},\alpha^{i_2},\ldots,\alpha^{i_m}\}$ is a codeword of
$\C$, so is its cyclic~shift $\{{\bf 0},
\alpha^{i_1+1},\alpha^{i_2+1},\ldots,\alpha^{i_m+1} \}$. In other
words, if we map each subspace $V \,{\in}\, \C$ into the
corresponding binary characteristic vector $x_V =
(x_0,x_1,\ldots,x_{q^n-2})$ given by
$$
x_i = 1 ~~\text{if $\alpha^i {\in}\kern1pt V$}
\hspace{4ex}\text{and}\hspace{4ex} x_i = 0 ~~\text{if $\alpha^i
{\not\in}\, V$}
$$
then the set of all such characteristic vectors is closed under cyclic
shifts. It would be very interesting to find out whether cyclic codes
approach the packing bound and the covering bound asymptotically.
Again, in this case we would like to see proofs similar to the ones of
Theorems~\ref{thm:packing_asymptotics}
and~\ref{thm:covering_asymptotics}. Of course, explicit families of
asymptotically good cyclic codes would be even more worthwhile.

\begin{center}
{\bf Acknowledgement}
\end{center}
The authors would like to thank Simeon Ball for introducing to
them the concept of normal spreads.


\begin{thebibliography}{10}
\providecommand{\url}[1]{#1}
\csname url@rmstyle\endcsname
\providecommand{\newblock}{\relax}
\providecommand{\bibinfo}[2]{#2}
\providecommand\BIBentrySTDinterwordspacing{\spaceskip=0pt\relax}
\providecommand\BIBentryALTinterwordstretchfactor{4}
\providecommand\BIBentryALTinterwordspacing{\spaceskip=\fontdimen2\font plus
\BIBentryALTinterwordstretchfactor\fontdimen3\font minus
  \fontdimen4\font\relax}
\providecommand\BIBforeignlanguage[2]{{%
\expandafter\ifx\csname l@#1\endcsname\relax
\typeout{** WARNING: IEEEtran.bst: No hyphenation pattern has been}%
\typeout{** loaded for the language `#1'. Using the pattern for}%
\typeout{** the default language instead.}%
\else
\language=\csname l@#1\endcsname
\fi
#2}}


\bibitem{AAK01}
R.\ Ahlswede, H.\,K.\ Aydinian, and L.\,H.\ Khachatrian,
``On perfect codes and related concepts,'' \emph{Designs, Codes, Crypt.}, vol.
22, pp. 221--237, 2001.

\bibitem{ABKV03} N. Alon, B. Bollobas, J.H. Kim and V.H. Vu,
  ``Economical covers with geometric applications,''
  \emph{Proc. London Math. Soc.}, vol. 86, pp. 273--301, 2003.

\bibitem{AlSp08}
    N. Alon and J. H. Spencer,
    {\em The Probabilistic Method},
    3rd edition, John Wiley \& Sons, Hoboken, 2008.

\bibitem{DeMe07}
    J. de Beule and K. Metsch,
    ``The maximum size of a partial spread in $H(5,q^2)$ is $q^3+1$,''
    \emph{J. Comb. Theory, Ser. A,} vol. 114, pp. 761--768, 2007.

\bibitem{BeUe91}
     A. Beutelspacher and J. Ueberberg,
     ``A characteristic property of geometric $t$-spreads
       in finite projective spaces,''
     \emph{Europ. J. Comb.,} vol. 12, pp. 277--281, 1991.

\bibitem{ESS}
     J. Eisfeld, L. Storme, and P. Sziklai,
     ``On the spectrum of the sizes of maximal partial line
      spreads in $PG(2n,q)$, $n \geq 3$,''
     \emph{Designs, Codes, and Cryptography,} vol. 36, pp. 101--110, 2005.

\bibitem{EJSSS}
     S. El-Zanati, H. Jordon, G. Seelinger, P. Sissokho, and L. Spence,
     ``The maximum size of a partial 3-spread in a finite vector space over $GF(2)$,''
     \emph{Designs, Codes, and Cryptography,} vol. 54, pp. 101--107, 2010.

\bibitem{Etz11}
T. Etzion, ``Covering subspaces by subspaces'', in preparation.

\bibitem{EtSi09}
T. Etzion and N. Silberstein, ''Error-correcting codes in
projective space via rank-metric codes and Ferrers diagrams'',
\emph{IEEE Trans.\ Inform. Theory}, vol. 55, no.7, pp. 2909--2919,
July 2009.

\bibitem{EtSi11}
T. Etzion and N. Silberstein, ``Codes and Designs Related to Lifted MRD Codes,''
\emph{arxiv.org/abs/1102.2593}.


\bibitem{EtVa08}
T. Etzion and A. Vardy, ``Error-correcting codes in projective
space'', in proceedings of \emph{International Symposium on
Information Theory}, pp. 871--875, July 2008.

\bibitem{EtVa11}
T. Etzion and A. Vardy, ``Error-correcting codes in projective
space'', \emph{IEEE Trans.\ Inform. Theory}, vol.\,57, no.\,2,
pp.\,1165--1173, February 2011.

\bibitem{EtVa11a}
T. Etzion and A. Vardy, ``On $q$-Analogs for Steiner Systems and
Covering Designs'', \emph{Advances in Mathematics of
Communications}, vol. 5, no. 2, pp. 161--176, 2011.

\bibitem{GaSz03}
     A. G\'{a}cs and T. Sz\H{o}nyi,
     ``On maximal partial spreads in $PG(n,q)$,''
     \emph{Designs Codes Crypt.,} vol. 29, pp. 123--129, 2003.

\bibitem{FrRo85}
P. Frankl and V. R\"{o}dl, ``Near perfect coverings in graphs and hypergraphs'',
\emph{European J. Combin.}, vol. 6, pp. 317--326,~1985.

\bibitem{KoKs08}
R.\ Koetter and F.\,R.\ Kschischang, ``Coding for errors and
erasures~in~random network coding,'' \emph{IEEE Trans.\ Inform.
Theory}, vol.\,54, no.\,8, pp.\,3579--3591, August 2008.

\bibitem{KoKu08}
A.\,Kohnert and S.\,Kurz, ``Construction of large
constant-dimension~codes with a prescribed minimum distance,''
\emph{Lecture Notes in Computer Science}, vol.\,5393, pp.\,31--42,
December 2008.

\bibitem{Lun99}
G. Lunardon, ``Normal spreads'', \emph{Geometriae Dedicata}, vol.
75, pp. 245--261, 1999.

\bibitem{PiSp89}
N. Pippenger and J. Spencer, ``Asymptotic behavior of the chromatic index for hypergraphs'',
\emph{J. Comb. Theory, Ser. A}, vol. 51, pp. 24--42, 1989.

\bibitem{Rod85}
V. R\"odl, ```On a packing and covering
problem'', \emph{Europ. J. Comb.},
vol. 6, pp. 69--78, 1985.

\bibitem{ScEt02}
M.\ Schwartz and T.\ Etzion, ``Codes and anticodes in the
Grassman graph'', \emph{J.\ Comb.\ Theory, Ser.\,A}, vol. 97, pp. 27--42, 2002.

\bibitem{SKK08}
D. Silva, F.\,R.\ Kschischang, and R.\ Koetter, ``A rank-metric
approach to error control in random network coding,'' \emph{IEEE
Trans. Inform. Theory},  vol.~54, pp. 3951--3967, September
2008.

\bibitem{TrRo10}
A.-L. Trautmann  and J. Rosenthal, ``New improvements on the
echelon-Ferrers construction'', in proc. of \emph{Int. Symp. on
Math. Theory of Networks and Systems}, pp. 405--408, July 2010.

\bibitem{Vu}
Van H. Vu, ``New bounds on nearly perfect matchings in hypergraphs:
Higher codegrees do help'', \emph{Random Structures and Algorithms},
vol.\,17, pp.\,29--63, 2000.

\bibitem{WXS}
H. Wang, C. Xing C and R. Safavi-Naini, ``Lee metric codes over integer residue rings'',
\emph{IEEE Trans. on Inform. Theory}, vol.\,IT-49, pp.\,866--872,
2003.

\bibitem{XiFu09}
S.-T. Xia and F.-W. Fu, ``Johnson type bounds on
constant dimension codes'', \emph{Designs, Codes
Crypt.}, vol. 50, pp. 163--172,~2009.





\end{thebibliography}

\end{document}